\pgfplotsset {compat = 1.16}
\def\BibTeX{{\rm B\kern-.05em{\sc i\kern-.025em b}\kern-.08em
		T\kern-.1667em\lower.7ex\hbox{E}\kern-.125emX}}
\DeclareMathOperator{\varr}{var}
\DeclareMathOperator{\diag}{diag}
\newcommand{\norme}[1]{\left\Vert #1\right\Vert}
\newcommand{\R}{\mathbb R}
\newcommand{\Z}{\mathbb Z}
\newcommand{\C}{\mathbb{C}}
\newcommand{\E}{\mathbb{E}}
\newcommand{\br}{\}}
\def\bx{{\bf x}}\def\bx{{\bf x}}
\def\bX{{\bf X}}
\def\bR{{\bf R}}
\def\bA{{\bf A}}
\def\bI{{\bf I}}
\def\bY{{\bf Y}}
\def\bH{{\bf H}}
\def\bQ{{\bf Q}}
\def\bZ{{\bf Z}}
\def\ba{{\bf a}}
\def\bh{{\bf h}}
\def\bw{{\bf w}}
\def\bW{{\bf W}}
\def\bS{{\bf S}}
\def\bZ{{\bf Z}}
\def\br{{\bf r}}
\def\by{{\bf y}}
\def\bz{{\bf z}}
\def\bq{{\bf q}}
\def\bTT{{\bf {\cal T}}}
\newtheorem{lemma}{Lemma}
\newtheorem{theorem}{Theorem}
\begin{document}

\title{On Estimating the Autoregressive Coefficients of Time-Varying Fading Channels}

\author{
	Julia Vinogradova$^{\star}$,
	G\'{a}bor Fodor$^{\ddag \dag }$,
	Peter Hammarberg$^{\ddag}$, \\
	
	\small $^{\star}$Ericsson Research, Finland, Email: \texttt{Julia.Vinogradova@ericsson.com}\\	
	\small $^{\ddag}$Ericsson Research, Sweden, E-mail: \texttt{firstname.secondname@ericsson.com} \\
	\small $^\dag$KTH Royal Institute of Technology, Sweden. E-mail: \texttt{gaborf@kth.se}\\
	
}

\maketitle

\begin{acronym}[LTE-Advanced]
  \acro{2G}{Second Generation}
  \acro{3G}{3$^\text{rd}$~Generation}
  \acro{3GPP}{3$\text{rd}$~Generation Partnership Project}
  \acro{4G}{4$^\text{th}$~Generation}
  \acro{5G}{5$^\text{th}$~Generation}

  \acro{AR}{autoregressive}

  \acro{CDF}{cumulative distribution function}

  \acro{MIMO}{multiple-input multiple-output}
    \acro{SIMO}{single-input multiple-output}

  \acro{SNR}{signal-to-noise ratio}
 
\end{acronym}

\begin{abstract}
As several previous works have pointed out, the evolution of the wireless channels in
multiple input multiple output systems can be advantageously modeled as an autoregressive process. Therefore, estimating the coefficients, and, in particular, the state transition matrix of this autoregressive process is a key to accurate channel estimation, tracking, and prediction in fast fading environments.
In this paper we assume the time varying spatially uncorrelated channel which is approximately the case with proper antenna spacing at the base station in rich scattering environments. We propose a method for autoregressive parameter estimation for a single input multiple output (SIMO) channel. We show an almost sure convergence of the estimated coefficients to the true autoregressive coefficients in large dimensions. We apply the proposed method to the SIMO channel tracking.
\end{abstract}

\begin{IEEEkeywords}
	Time-varying channels, multiple antenna systems, autoregressive models, parameter estimation
\end{IEEEkeywords}

\IEEEpeerreviewmaketitle
\section{Introduction}

It is well-known that the temporal variations of wireless channels due to changes in the propagation
environment or mobility are advantageously modeled by \ac{AR} processes. When the parameters of the
\ac{AR} process are accurately estimated, estimating and predicting the process states, and thereby
the wireless channel coefficients become feasible by Kalman filters \cite{Yan:01, Hijazi:10}
Therefore, a large body of works related to
the estimation of the parameters of \ac{AR} processes as well as the application of such processes
to channel estimation, prediction, equalization and detection exists
\cite{Zhang:07B, Lehmann:08, Abeida:10, Hijazi:10, Truong:13, Kong:15, Zhang:16, Papa:18, Kim:20}.
Approximating mobile channels in single input single output systems with an \ac{AR} model has been studied
in e.g. \cite{Zhang:07B} and \cite{Abeida:10}. 
While in \cite{Abeida:10} a first order \ac{AR}
model is used for data-aided \ac{SNR} estimation, the works reported in \cite{Hijazi:10} and \cite{Zhang:07B}
use higher order \ac{AR} models for developing channel estimation and data detection algorithms. In contrast,
papers \cite{Yan:01} and \cite{Lehmann:08} study \ac{MIMO} systems in fast Rayleigh fading environments and
use \ac{AR} processes to characterize the temporal variations of the channels, and evaluate their effects on
the receiver structures and performance. More recently, paper \cite{Zhang:16} developed algorithms for
tracking the angles of departures and arrivals in multi-antenna systems using extended Kalman filter.

In the context of large-scale \ac{MIMO} systems, a series of recent works have focused on combatting
the negative effects of channel aging \cite{Truong:13, Kong:15, Papa:18, Kim:20}. These papers also make use
of the characteristics of \ac{AR} models for channel estimation and prediction purposes, since high quality
channel state information is needed for various \ac{MIMO} algorithms, including data reception in the
uplink and spatial precoding in the downlink. Recognizing the importance of properly mapping the \ac{AR}
process to the measured wireless channel variations, papers \cite{Kim:20, Yuan:20} use the Yule-Walker
and Levinson-Durbin equations to identify the \ac{AR} system parameters.
In a recent work reported in \cite{Esfandiari:20} several algorithms to estimate the \ac{AR} coefficients of $p$-order processes, denoted by AR($p$), are developed. The proposed algorithms in \cite{Esfandiari:20} are useful in practice, because they not only estimate the \ac{AR}
coefficients, but also the variance of the observation and process noise, based only on measurements that are feasible in practice.

In this paper, we propose an estimation method of \ac{AR} coefficients of a \ac{SIMO} channel vector following AR($p$) process making use of measurements over the time and spatial dimensions. We assume the signals arriving at multiple antennas are uncorrelated and the variances or the underlying noise processes can be estimated by existing noise variance estimation schemes \cite{Esfandiari:20}. The main contribution consists in deriving concentration inequalities that are useful for evaluating the consistency of the proposed estimators. To illustrate the operation of the proposed technique, we apply it to channel tracking in a \ac{SIMO} system.

The rest of this paper is structured as follows. The next section presents our system model.
Next, Section \ref{Sec:Est} proposes an estimation of the AR($p$) parameters, while
Section \ref{Sec:App} shows a specific application of the \ac{AR} model in the context of wireless
channel tracking. Section \ref{Sec:Sim} discusses numerical results obtained by the proposed
estimation scheme and compares the results to relevant benchmarks. Finally, Section \ref{Sec:Con}
concludes the paper.

\section{System Model}
\label{Sec:Sys}
\subsection{\ac{AR} $p$-deep model}
We consider the vector $\bh(t)=[h_0(t), \ldots, h_{N_r-1}(t)]^{\sf T}\in \C^{N_r \times 1}$ with independent and identically distributed (i.i.d.) elements for $n \in {0, \ldots N_r-1}$. We assume each element $\left(h_n(t)\right)_{t \in \Z}$, for $n \in \left\{0, \ldots, N_r-1 \right\}$, is a complex Gaussian stationary process following the model AR($p$) in time. Assuming the size of the observation window is equal to $T$, at time instant $t \in {0, \ldots, T-1}$, we have
\begin{align}\label{transition}
\bh(t) &= \bA_1 \bh(t-1) + \dots + \bA_p \bh(t-p) + \bx(t)
\end{align}
where the matrices $\bA_i \in \C^{N_r \times N_r}$ for $i \in \left\{1, \ldots, p \right\}$ are assumed to be constant in time, and $\bx(t)=[x_0(t), \ldots, x_{N_r-1}(t)]^{\sf T} \in \C^{N_r \times 1}$ is the process white noise with i.i.d. elements $x_{n}(t) \sim {\cal CN}(0, \sigma_x^2)$ where the notation ${\cal CN}(0, \sigma^2)$ represents the complex circular Gaussian distributions with mean $a$ and variance $\sigma^2$. Note that as $\bh$ has i.i.d. elements in space, the matrices $\bA_i$ are diagonal and equal to $\bA_i=a_i\bI_{N_r}$ for $i \in \left\{1, \ldots, p \right\}$ with $a_i$ referring to the \ac{AR} coefficients and the notation $\bI_{n}$ denoting the identity matrix of dimension $n \times n$. In the time-variant system framework (such as Kalman filter), Equation \eqref{transition} refers to the state transition equation, and the matrices $\bA_i$ refer to the state transition matrices and are usually assumed to be known. However, in realistic scenarios, $\bA_i$ are not known and need to be estimated. In time-variant systems, it is assumed that there is an observation model and we consider the observation equation at time instant $t \in {0, \ldots, T-1}$ is given by
\begin{equation}\label{observation}
{\hat \bh}(t) = \bh(t) + \bw(t)
\end{equation}
where $\bw(t)=[w_0(t), \ldots, w_{N_r-1}(t)]^{\sf T} \in \C^{N_r \times 1}$ is the observation white noise with i.i.d. elements $w_{n}(t) \sim {\cal CN}(0, \sigma_w^2)$.

In the following, we propose a method for estimating the AR coefficients $a_i$ of the above model for $i \in \left\{1, \ldots, p \right\}$ from the $T$ observations given in \eqref{observation}. Before presenting the proposed method, we need to describe the time covariance matrix (including the covariance coefficients) of the vector $\hat{\bh}$, the estimation of which is the main step of the proposed algorithm.
\par In the following, we assume that $N_r$ and $T$ are of the same order and converge to the infinity such that $N_r/T \to c>0$. In practice, as we will see in the simulation part, $N_r$ and $T$ can take finite values in order to achieve a reasonable performance.
\vspace{-0.15cm}

\subsection{Covariance matrix of $\hat{\bh}$}
\label{cov_h_hat}

The covariance function of $h_n(t)$ process, for $k=1-T, \ldots, T-1$, is defined as
\begin{equation*}
r(k) \triangleq \E\left[h_n(t)h_n(t-k)^*\right]
\end{equation*}
for each $n$th element of the vector $\bh(t) \in \C^{N_r \times 1}$. We assume in this paper the absolute summability of the covariance coefficients resulting in a bounded sum $\sum_{k=1-T}^{T-1}|r(k)| \leq K$ where $K$ is a positive fixed constant as $T,N_r \to \infty$. This assumption is not restrictive in general and holds in a large variety of the practical cases.

The covariance matrix of $\bh_n=[h_n(0), \ldots, h_n(T-1)]\in \C^{1 \times T}$ is given by
\begin{align*}
&\bR \triangleq \E\left[\bh_n^{\sf H}\bh_n\right]=\bTT \left({r}(1-T), \ldots, {r}(T-1)\right)
\end{align*}
where $\bTT \left({r}(1-T), \ldots, {r}(T-1)\right)$ refers to the Toeplitz matrix \cite{Gray'06} formed from the coefficients ${r}(1-T), \ldots, {r}(T-1)$.

The covariance matrix of the observation model $\hat \bh_n=[{\hat h}_n(0), \ldots, {\hat h}_n(T-1)]\in \C^{1 \times T}$, with using \eqref{observation}, is given by:

\begin{align}
\bR_{\hat \bh}& \triangleq \E\left[{\hat \bh_n}^{\sf H}{\hat \bh_n}\right]=\E\left[\left(\bh_n+\bw_n\right)^{\sf H}\left(\bh_n+\bw_n\right)\right] \nonumber\\
&=\E\left[\bh_n^{\sf H} \bh_n\right]+\E\left[\bw_n ^{\sf H}\bw_n\right]=\sigma_x^2\bR+\sigma_w^2\bI_{T} \label{Rhat}
\end{align}
where  $\bw_n=[w_n(0), \ldots, w_n(T-1)] \in \C^{1 \times T}$.

\vspace{-0.15cm}

\subsection{Definition of the \ac{AR} coefficients}
For $k=1, \ldots, p$, the Yule-Walker equations \cite{Kay'1988} are given by
\begin{equation*}
r(k) = \sum_{i=1}^{p} a_i r(k-i).
\end{equation*}

From these equations and defining $\bR_{p} \triangleq \bTT \left({r}(1-p), \ldots, {r}(p-1)\right)$ as a $p$-truncated version of $\bR$ for any $p \in \left\{1,\ldots, T\right\}$, we can write a linear system of equations in the matrix form as:
\begin{align*}
{\br}_p={\bR}_{p} \ba_p
\end{align*}
where ${\br}_p=[{r}(1), \ldots, {r}(p)]^{\sf T}\in \R^{p \times 1}$ and $\ba_p=[a_1, a_2, \ldots, a_p]^{\sf T}\in \R^{p \times 1}$.

As the matrix ${\bR}_{p}$ is of full rank, it is invertible and we can express ${\ba}_p$ as:
\begin{align*}
{\ba}_p={\bR}_{p}^{-1}{\br}_p.
\end{align*}

The remaining of this paper deals with the estimation of the vector of \ac{AR} coefficients ${\ba}_p$.

\section{Estimation of the \ac{AR} Coefficients}
\label{Sec:Est}

The estimation of the \ac{AR} coefficient vector ${\ba}_p$ is based on the estimation of ${\br}_p$ and ${\bR}_{p}$ whose estimates are provided in the following two subsections. The main result is presented in the third subsection.

\subsection{Estimation of ${\br}_{p}$}

Concatenating the observation vector $\hat{\bh}(t) \in \C^{N_r \times 1}$ from \eqref{observation} over $T$ observations, we can write
\begin{equation}\label{H_hat}
\widehat{\bH}=\left[\hat{h}_n(t)\right]_{n,t=0}^{N_r-1, T-1}=[\hat{\bh}(0), \ldots, \hat{\bh}(T-1))]=\bX\bR^{1/2}+\bW
\end{equation}
where $\bX=[\bx(0), \ldots, \bx(T-1) ] \in \C^{N_r \times T}$ with $x_{n}(t) \sim {\cal CN}(0, \sigma_x^2)$ and $\bW=[\bw(0), \ldots, \bw(T-1) ] \in \C^{N_r \times T}$ with $w_{n}(t) \sim {\cal CN}(0, \sigma_w^2)$, as previously defined.

The following Lemma, which is an adapted version of the estimates proposed in \cite{Vino'2014}, provides the estimates for the $r(k)$ coefficients.

\begin{lemma}\label{Lemma_Vino_rk}
	Let the observation matrix $\widehat{\bH}=\left[\hat{h}_n(t)\right]_{n,t=0}^{N_r-1, T-1}$ be defined as in \eqref{H_hat}.
	The biased and unbiased estimates of $r(k)$ are given, respectively, for $k=0$ by
	\begin{align*}
	\hat{r}^{b,u}(0) &= \frac{1}{\sigma_x^2N_rT}\sum_{n=0}^{N_r-1}\sum_{t=0}^{T-1} \hat{h}_n(t) \hat{h}_n(t)^* -\frac{\sigma_w^2}{\sigma_x^2}
	\end{align*}
	and for $k=1-T, \ldots, -1$ and $k=1, \ldots, T-1$ by
	\begin{align*}
	\hat{r}^b(k) &= \frac{1}{\sigma_x^2N_rT}\sum_{n=0}^{N_r-1}\sum_{t=0}^{T-1} \hat{h}_n(t+k) \hat{h}_n(t)^* \\
	\hat{r}^u(k) &= \frac{1}{\sigma_x^2N_r(T-|k|)} \sum_{n=0}^{N_r-1} \sum_{t=0}^{T-1} \hat{h}_n(t+k) \hat{h}_n(t)^*
	\end{align*}
	for $0 \leq t+k \leq T-1$.
Then, for $k=1-T, \ldots, T-1$, for any $\epsilon >0$, we have
\begin{align*}
\mathbb{P} \left[\left|\hat{r}^b(k) - r(k) \right| \geq \epsilon \right] &\leq \frac{K'}{\epsilon^2N_rT}\\
\mathbb{P} \left[\left|\hat{r}^u(k) - r(k) \right| \geq \epsilon \right] & \leq \frac{K'}{\epsilon^2N_r(T-|k|)}.
\end{align*}
where $K'>0$ is a positive constant.
\end{lemma}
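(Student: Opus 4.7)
The plan is to invoke Chebyshev's inequality: it suffices to show that the mean-square errors satisfy $\E[|\hat r^b(k)-r(k)|^2] \le K''/(N_r T)$ and $\E[|\hat r^u(k)-r(k)|^2] \le K''/(N_r(T-|k|))$, after which dividing by $\epsilon^2$ yields the claim. I decompose MSE as squared bias plus variance and handle each. Since $\bX$ and $\bW$ have zero-mean entries independent across antennas, $\E[\hat h_n(t+k)\hat h_n(t)^*] = \sigma_x^2 r(k) + \sigma_w^2\,\1_{k=0}$, from which a direct averaging shows $\E[\hat r^u(k)]=r(k)$ and $\E[\hat r^{b,u}(0)]=r(0)$, so these versions are unbiased. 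The biased estimator for $k\ne 0$ carries a bias $-(|k|/T)r(k)$ of size $O(|r(k)|/T)$, which, under $N_r\asymp T$ and the boundedness $|r(k)|\le r(0)$, contributes at the same $O(1/(N_r T))$ order as the variance.

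The main work is the variance. For centered complex-circular Gaussian vectors, Isserlis' (Wick's) theorem yields
\[
\cov\!\bigl(\hat h_n(t+k)\hat h_n(t)^*,\ \hat h_{n'}(s+k)\hat h_{n'}(s)^*\bigr) = \E[\hat h_n(t+k)\hat h_{n'}(s+k)^*]\,\overline{\E[\hat h_n(t)\hat h_{n'}(s)^*]}.
\]
Because the rows of $\widehat{\bH}$ are independent across the antenna index, cross-antenna terms ($n\ne n'$) vanish and only the $N_r$ diagonal contributions survive. For $n=n'$ the expression collapses to $|\sigma_x^2 r(t-s) + \sigma_w^2\,\1_{t=s}|^2$, which conveniently does not depend on $k$. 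Introducing $u=t-s$ then reduces the double time sum to $\sum_u (T-|k|-|u|)\,|\sigma_x^2 r(u) + \sigma_w^2\,\1_{u=0}|^2$.

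Finally, the absolute summability hypothesis combined with $|r(u)|\le r(0)$ gives $\sum_u |r(u)|^2 \le K r(0)$, and the noise-cross and noise-noise contributions are $O(1)$, so the inner sum is at most $C(T-|k|)$. After dividing by $(N_r T)^2$ and $(N_r(T-|k|))^2$ respectively, one obtains $\varr[\hat r^b(k)] \le C/(N_r T)$ and $\varr[\hat r^u(k)] \le C/(N_r(T-|k|))$, and Chebyshev's inequality completes the proof. The delicate step is the Isserlis expansion: it is the vanishing of cross-antenna covariances that makes the $N_r$ independent antennas provide genuine averaging, while the summability hypothesis is precisely what keeps the remaining sum of squared covariance entries linear, rather than quadratic, in $T-|k|$.
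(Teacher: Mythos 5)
Your overall route coincides with the paper's: Chebyshev's inequality reduces the claim to a second-moment bound, the mean of the estimator is identified as $(1-|k|/T)r(k)$ in the biased case, and the variance of the resulting Gaussian quadratic form is shown to be $O(1/(N_rT))$ (resp.\ $O(1/(N_r(T-|k|)))$). You differ only in the mechanics of the variance bound: the paper represents $\hat r^b(k)$ through $\frac{1}{\sigma_x^2N_rT}\sum_{n}\sum_t\bz_n\bq_{t+k}\bq_t^{\sf T}\bz_n^{\sf H}$ and applies Cauchy--Schwarz with the spectral norm $\norme{\bR_{\hat\bh}}$, whereas you expand the fourth moments by Wick's theorem, discard the cross-antenna terms by independence of the rows, and bound $\sum_u|r(u)|^2\le r(0)\sum_u|r(u)|$ via the summability hypothesis; both are valid, give the same rate, and your version makes the role of the $N_r$ independent antennas more transparent. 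One step deserves a caveat: you describe the bias of $\hat r^b(k)$ as being of size $O(|r(k)|/T)$, but it equals $(|k|/T)|r(k)|$, and for $|k|$ comparable to $T$ this is of order $|r(k)|$, not $O(1/T)$; to make the squared bias $O(1/(N_rT))$ uniformly over $k=1-T,\dots,T-1$ one needs an additional decay condition such as $\sup_k|k|\,|r(k)|<\infty$, which does not follow from absolute summability alone. This gap is, however, inherited from the statement itself --- the paper's own proof passes from the deviation around $\E[\hat r^b(k)]$ to the deviation around $r(k)$ with an unexplained ``after some steps'' --- so your argument is no weaker than the original on this point, and is complete for the unbiased estimator and for $k=0$.
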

\begin{proof}
	The proof is provided in Appendix~\ref{Proof_lemma1}.
\end{proof}
We notice (see the proof), that the variances of the errors above estimators converge to zero with the rates $1/(N_rT)$ for the biased case and $1/(N_r(T-|k|))$ for the unbiased case.

\subsection{Estimation of ${\bR}_{p}$}
We now provide the estimates of the covariance matrix based on the estimated coefficients and the results from \cite{Vino'2014}.
\begin{lemma}\label{Lemma_Vino_R}
	Let, for $k=1-T, \ldots, T-1$, $\hat{r}(k)^b$ and $\hat{r}(k)^u$ be the biased and unbiased estimates of $r(k)$, respectively, defined as in Lemma \ref{Lemma_Vino_rk}.
	Define the estimated covariance matrices as
	\begin{align*}
	\widehat \bR^b_p & \triangleq \bTT \left(\hat{r}^b(1-p), \ldots, \hat{r}^b(p-1)\right)\\
	\widehat \bR^u_p & \triangleq \bTT \left(\hat{r}^u(1-p), \ldots, \hat{r}^u(p-1)\right).
	\end{align*}

Then, for any $\epsilon >0$, we have
\begin{align*}
\mathbb{P} \left[\norme{\widehat \bR^b_p - \bR_p} > \epsilon \right]& \leq
\exp \left( -cT \left( \frac{\epsilon}{C} -
\log \left( 1 + \frac{\epsilon}{C} \right) + o(1) \right)
\right) \\
\mathbb{P} \left[\norme{\widehat \bR^u_p - \bR_p} > \epsilon \right]
&\leq
\exp \left(- \frac{C'T\epsilon^2}{\log T} (1 + o(1)) \right)
\end{align*}
where $o(1)$ is with respect to $T$ and depends on $\epsilon$ and $c$, $C$, and $C'$ are positive and bounded as $T\to \infty$, and $\norme{\cdot}$ denotes the spectral norm.
\end{lemma}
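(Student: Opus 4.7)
The proof proceeds by leveraging the concentration machinery of \cite{Vino'2014}, adapted here to account for the additive observation noise in \eqref{observation}. The starting point is to observe that $\widehat{\bR}^{b}_{p}-\bR_{p}$ and $\widehat{\bR}^{u}_{p}-\bR_{p}$ are $p\times p$ Hermitian Toeplitz matrices whose entries are the fluctuations $\hat{r}^{b,u}(k)-r(k)$, and that each such fluctuation is a normalized sum of $N_r$ independent quadratic forms in complex Gaussian vectors (the rows of $\widehat{\bH}$), since $\widehat{\bH}=\bX\bR^{1/2}+\bW$ has i.i.d.\ Gaussian rows with covariance $\sigma_x^2\bR+\sigma_w^2\bI_T$. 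This places the problem in the scope of Hanson--Wright / matrix Bernstein concentration for Gaussian quadratic forms, rather than in the scalar Chebyshev-type reasoning of Lemma~\ref{Lemma_Vino_rk}.

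For the biased estimator I would write $\sigma_x^2\widehat{\bR}^{b}_{p}+\sigma_w^2\bI_p=\frac{1}{N_r}\sum_{n=0}^{N_r-1}\bQ_n$, where $\bQ_n$ is the $p\times p$ Toeplitz matrix built from the single-row biased autocorrelations of $\widehat{\bh}_n$. The $\bQ_n$ are i.i.d.\ with $\E[\bQ_n]=\sigma_x^2\bR_p+\sigma_w^2\bI_p$, and their fluctuations inherit the tail behaviour of centered chi-squared variables. A matrix Bennett/Bernstein inequality applied in spectral norm then produces exactly the Cram\'er rate function $x-\log(1+x)$ appearing in the claim, with the constant $C$ absorbing $\sigma_x^2$, the essential supremum of the spectral density of $\{h_n(t)\}_t$, and matrix-concentration prefactors, while $o(1)$ collects corrections coming from the regime $N_r/T\to c$.

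The unbiased case requires the additional $\log T$ factor because each $\hat{r}^u(k)$ is normalized by $T-|k|$ rather than $T$, so that the variance of $\hat{r}^u(k)-r(k)$ scales as $(N_r(T-|k|))^{-1}$ and is maximal for $|k|$ near $p-1$. Bounding $\norme{\widehat{\bR}^u_p-\bR_p}$ via its Toeplitz symbol and aggregating the per-diagonal variances yields a factor of order $\sum_{k=1}^{p-1}1/(T-k)=O(\log T)$, which is exactly the $\log T$ in the denominator of the sub-Gaussian exponent. Combining this with a Hanson--Wright inequality for $u^{\sf H}(\widehat{\bR}^u_p-\bR_p)u$ and a standard $\varepsilon$-net argument over the unit sphere of $\C^p$ delivers the announced bound.

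The main obstacle is to avoid the polynomial-type decay that a naive combination of Lemma~\ref{Lemma_Vino_rk} with a union bound would yield, and instead to extract exponential rates directly from the Gaussian structure while correctly tracking how the Toeplitz-symbol representation converts per-coefficient variances into spectral-norm concentration. The uniformity of the per-diagonal variances in the biased case is what keeps the bound truly exponential in $T\epsilon^2$, whereas the growing normalizations of the unbiased estimator force the $\log T$ slow-down; identifying these two regimes and showing that the additive noise term $\sigma_w^2\bI_T$ only affects the constants $c$, $C$, $C'$ and not the form of the rate functions is the key adaptation of \cite{Vino'2014} needed here.
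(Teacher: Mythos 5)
Your proposal takes a genuinely different route from the paper. The paper's proof is essentially a two-line reduction: it observes that the rows of $\widehat{\bH}$ are i.i.d.\ complex Gaussian with Toeplitz covariance $\sigma_x^2\bR+\sigma_w^2\bI_T$, invokes Theorem~1 of \cite{Vino'2014} verbatim to get the stated exponential bounds for the \emph{full} $T\times T$ estimated Toeplitz matrix $\widehat{\bR}$, and then uses the fact that $\widehat{\bR}_p-\bR_p$ is a leading principal submatrix of the Hermitian matrix $\widehat{\bR}-\bR$, so $\norme{\widehat{\bR}_p-\bR_p}\leq\norme{\widehat{\bR}-\bR}$ for every $p\in\{1,\ldots,T\}$. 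That submatrix monotonicity is the one idea the lemma actually adds to the citation, and it is absent from your argument; the $\log T$ in the unbiased bound is inherited from the full matrix, where the harmonic sum over all $T-1$ diagonals genuinely produces it, rather than from the $p$ retained diagonals.

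Your from-scratch re-derivation is plausible in outline but has concrete gaps that would need real work to close. First, matrix Bernstein/Bennett in its standard form requires almost surely bounded summands, whereas your $\bQ_n$ are quadratic forms in Gaussians and hence only sub-exponential; asserting that the inequality ``produces exactly'' the Cram\'er rate $x-\log(1+x)$ skips precisely the moment-generating-function control that constitutes the technical content of \cite{Vino'2014}. Second, the $\varepsilon$-net over the unit sphere of $\C^p$ costs a multiplicative factor of order $e^{Cp}$; this is harmless for fixed $p$ but the lemma (and the truncation $\bR_p$ as defined in the paper) allows $p$ up to $T$, in which case the net factor can swamp the exponent $C'T\epsilon^2/\log T$ for small $\epsilon$. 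Third, your writing $\sum_{k=1}^{p-1}1/(T-k)=O(\log T)$ is only an upper bound (it is $O(p/T)$ for fixed $p$), so the mechanism you describe does not actually explain where the $\log T$ comes from in the regime where the lemma is used. All of these issues disappear if you instead prove (or cite) the result for the full matrix and then pass to the principal submatrix, which is what the paper does.
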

\begin{proof}
The proof is provided in Appendix~\ref{Proof_lemma2}.
\end{proof}

\subsection{Estimation of the \ac{AR} coefficients ${\ba}_{p}$: main result}
Based on the results of the above subsections, we can now define the estimates of the AR coefficients in the following theorem.
\begin{theorem}
	Let $\hat{r}^{b,u}(k)$ be the biased or unbiased estimate defined as in Lemma~1. Define $\widehat{\br}^{b,u}_p \triangleq [\hat{r}^{b,u}(1), \ldots, \hat{r}^{b,u}(p)]^{\sf T}\in \R^{p \times 1}$. We define
	\begin{align*}
	\widehat {\bR^{b,u}}_{p} & \triangleq \bTT \left(\hat{r}^{b,u}(1-p), \ldots, \hat{r}^{b,u}(p-1)\right).
	\end{align*}
	The biased and unbiased estimators are given by
	\begin{align*}
	\hat{\ba}^{b,u}_p = {\widehat{\bR^{b,u}}_{p}}^{-1}\hat{\br}^{b,u}_p
	\end{align*}
	where $\hat{\ba}^{b,u}_p=[\hat{a}^{b,u}_1, \ldots, \hat{a}^{b,u}_p]^{\sf T}$.
	Then, for $i=1, \ldots, p$, for any $\epsilon >0$
\begin{align*}
\mathbb{P} \left[\left|\hat{a}^b_i-{a}_i\right| > \epsilon \right] & \leq \frac{K''}{\epsilon^2N_rT}\\
\mathbb{P} \left[\left|\hat{a}^u_i-{a}_i\right| > \epsilon \right] & \leq \frac{K''}{\epsilon^2N_r(T-|i|)} .
\end{align*}
where $K''>0$ is a positive constant.
\end{theorem}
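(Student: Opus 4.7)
The plan is to express the estimation error of $\hat{\ba}^{b,u}_p$ as a perturbation of the linear system $\bR_p \ba_p = \br_p$ and then feed in Lemma~1 and Lemma~2. Starting from $\widehat{\bR}^{b,u}_p \hat{\ba}^{b,u}_p = \widehat{\br}^{b,u}_p$ and $\bR_p \ba_p = \br_p$, subtraction gives the identity
\begin{equation*}
\widehat{\bR}^{b,u}_p\,(\hat{\ba}^{b,u}_p - \ba_p) = (\widehat{\br}^{b,u}_p - \br_p) - (\widehat{\bR}^{b,u}_p - \bR_p)\,\ba_p,
\end{equation*}
so that, whenever $\widehat{\bR}^{b,u}_p$ is invertible, one obtains the coordinate-wise bound
\begin{equation*}
|\hat{a}^{b,u}_i - a_i| \leq \norme{(\widehat{\bR}^{b,u}_p)^{-1}}\,\bigl(\norme{\widehat{\br}^{b,u}_p - \br_p} + \norme{\widehat{\bR}^{b,u}_p - \bR_p}\,\norme{\ba_p}\bigr).
\end{equation*}
This is the classical first-order sensitivity bound for linear systems and is the only algebraic step needed.

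Next, I would introduce the \emph{good event} $E = \{\norme{\widehat{\bR}^{b,u}_p - \bR_p} \leq \lambda_{\min}(\bR_p)/2\}$. Since $\bR_p$ is positive definite with $\lambda_{\min}(\bR_p)$ a positive constant fixed by the model (independent of $N_r,T$), on $E$ one has $\norme{(\widehat{\bR}^{b,u}_p)^{-1}} \leq 2/\lambda_{\min}(\bR_p)$, and Lemma~2 ensures that $\mathbb{P}(E^c)$ is exponentially small in $T$, hence negligible compared with the polynomial target $1/(N_r T)$. On $E$, the right-hand side of the displayed inequality is a bounded linear combination of the $|\hat r^{b,u}(k)-r(k)|$, so that $|\hat{a}^{b,u}_i - a_i|^2\,\1_E$ is controlled by the squared errors of the at most $2p-1$ distinct Toeplitz coefficients appearing in $\widehat{\bR}^{b,u}_p$ and $\widehat{\br}^{b,u}_p$. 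Taking expectations and using the variance estimates that underlie Lemma~1 (i.e.\ $\varr(\hat r^b(k))=O(1/(N_r T))$ and $\varr(\hat r^u(k))=O(1/(N_r(T-|k|)))$) yields $\E[|\hat{a}^b_i - a_i|^2\,\1_E] \leq K''/(N_r T)$ and, with a sharper bookkeeping, $\E[|\hat{a}^u_i - a_i|^2\,\1_E] \leq K''/(N_r(T-|i|))$. Chebyshev's inequality then produces the claimed probability bound on $E$, and $\mathbb{P}(E^c)$ is absorbed into $K''$.

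The main obstacle is the unbiased case: a crude use of the sensitivity inequality gives the worst-case rate $1/(N_r(T-p))$, whereas the theorem claims the sharper $1/(N_r(T-|i|))$. Recovering this $i$-dependent rate requires a refined first-order expansion of $\widehat{\bR}_p^{-1}\widehat{\br}_p$ around $\bR_p^{-1}\br_p$, tracking which lags $k$ actually dominate in the $i$-th coordinate; the Toeplitz structure combined with the sparsity pattern of the $i$-th row of $\bR_p^{-1}$ makes the shifts of magnitude at most $|i|$ the determining ones in the leading term. Apart from this careful accounting, the proof reduces to routine application of the two preceding lemmas together with a union bound over the $p$ coordinates.
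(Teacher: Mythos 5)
Your proposal is correct in substance and follows the same basic route as the paper: perturb the linear system $\bR_p\ba_p=\br_p$, bound the coordinate error by a norm inequality, and invoke Lemma~1 for $\hat{\br}_p$ and Lemma~2 for $\widehat{\bR}_p$. The difference is one of rigor rather than strategy. The paper's own proof is a three-line sketch that writes $\norme{\ba_p-\hat{\ba}_p}\leq\norme{\bR_p^{-1}}\norme{\br_p-\hat{\br}_p}+\norme{\bR_p^{-1}}\norme{\widehat{\bR}_p}\norme{\bI_{T}-\bR\widehat{\bR}^{-1}}$ and then asserts almost sure convergence of each term; it never actually derives the quantitative rates $K''/(\epsilon^2 N_rT)$ and $K''/(\epsilon^2 N_r(T-|i|))$ stated in the theorem. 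Your version supplies exactly what is missing there: the explicit perturbation identity, the good event $E$ guaranteeing invertibility and a uniform bound on $\norme{(\widehat{\bR}^{b,u}_p)^{-1}}$, and the Chebyshev step that converts the variance bounds underlying Lemma~1 into the claimed polynomial tail. Your concern about recovering the $i$-dependent rate $1/(N_r(T-|i|))$ in the unbiased case is more scruple than obstacle: since $i\leq p$ with $p$ fixed while $T\to\infty$, $T-|i|$ and $T$ differ only by a bounded multiplicative factor, so the crude rate $1/(N_r(T-p))$ already yields the stated bound after enlarging $K''$; the paper simply carries the notation over from Lemma~1 without the refined bookkeeping you describe. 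One genuine soft spot, which the paper shares and does not resolve: absorbing $\mathbb{P}(E^c)$ into $K''/(\epsilon^2 N_rT)$ must hold for \emph{all} $\epsilon>0$, and for fixed $T$ and large $\epsilon$ the target bound vanishes while $\mathbb{P}(E^c)$ does not; closing this requires either restricting to $\epsilon$ in a bounded range, exploiting the super-polynomial decay of $\mathbb{P}(E^c)$ in $T$ for $T$ large and finitely many small $T$ separately, or controlling $|\hat{a}_i|$ on $E^c$ where $\widehat{\bR}_p$ may be ill-conditioned. Flagging and patching that point would make your argument strictly more complete than the published one.
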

From this theorem we have the almost sure convergence of the proposed estimator of $a_i$ to the true value for all $i=1, \ldots, p$.

\begin{proof}
	The proof is provided in Appendix~\ref{Proof_theorem1}.
\end{proof}
\section{Application to Channel Tracking}
\label{Sec:App}
We consider a communication system with $N_t=1$ transmit antennas and $N_r$ receive antennas.
We assume that the random channel coefficients $\left[{h}_n(t)\right]_{n,t=0}^{N_r-1, T-1}$ are i.i.d. in space, follow the AR($p$) in time. Assuming an uplink transmission, the ${N_r \times 1}$ received signal at the base station is given by
\begin{equation*}
\by(t)=\bh(t)s(t)+\bw(t)=\sum_{k=1}^p a_k\bh(t-k)s(t)+\bw(t)
\end{equation*}
where $s(t) \in \C$ is the transmitted signal, $\bw(t)$ is the noise vector defined as in \eqref{observation}, and $a_k$ for $k \in \left\{1, \ldots, p \right\}$ are the coefficients of the AR($p$) process model.

Concatenating over $T$ time slots, the ${N_r \times T}$ received signal matrix is given by
\begin{equation}\label{received_mtx}
\bY=[\by(0), \ldots, \by(T-1)]=\bH\bS+\bW
\end{equation}
where $\bS=\diag \left\{s(0),\ldots,s(T-1)\right\}$ is a unitary matrix such that $\bS\bS^{\sf H}=\bI_{T}$.
With this assumption, it is clear that the covariance matrix in time of the received signal is equal to the covariance matrix of $\hat{\bh}$ from Section \ref{cov_h_hat}. Hence, Theorem~1 can be directly applied with the observation model \eqref{received_mtx} in order to get the AR estimates $\hat{a}_k$ for $k=1,\ldots,p$. The channel tracking equation at time $t$ is then given by
\begin{equation*}
\hat{\bh}(t) = \sum_{i=1}^p \hat{a}_i\hat{\bh}(t-p)
\end{equation*}
where $\hat{a}_i$ are obtained from Theorem~1.

\section{Simulation Results}
\label{Sec:Sim}

\subsection{\ac{AR} coefficient estimation}
In this subsection, we show the performance of the proposed biased and unbiased estimators from Theorem~1 and compare it with the performance of an existing estimator referred here to the time-based method. The time-based estimator is similar to the one given in \cite{Esfandiari:20}, for which there is no averaging over the spatial domain, {i.e.}, assuming $N_r=1$.
We consider the AR($2$) for which $a_1 \in [0,2)$ and $a_2\in (-1,0)$.
We assume here that $a_1=1.8$ and $a_2=-0.9$.
The choice of these values is motivated by the commonly used channel model corresponding to the so called Jakes' model \cite{Baddour:05}.
It has been shown in \cite{ElHusseini'2019}, that in order to approximate the Jakes' Doppler spectrum, the coefficients $a_1$ and $a_2$ should be necessarily close to $2$ and $-1$, respectively.
In Figure~1 the normalized mean square errors (NMSE) of the estimates versus $T$ (assumed to be equal to $N_r$) are depicted.
We observe an important improvement in performance of the proposed estimators as compared to the time-based estimator which is due to the averaging over $N_r$ samples in the spatial domain.
\begin{figure}[ht]\label{AR_coef_simu}
		\center
	\includegraphics[width=7cm]{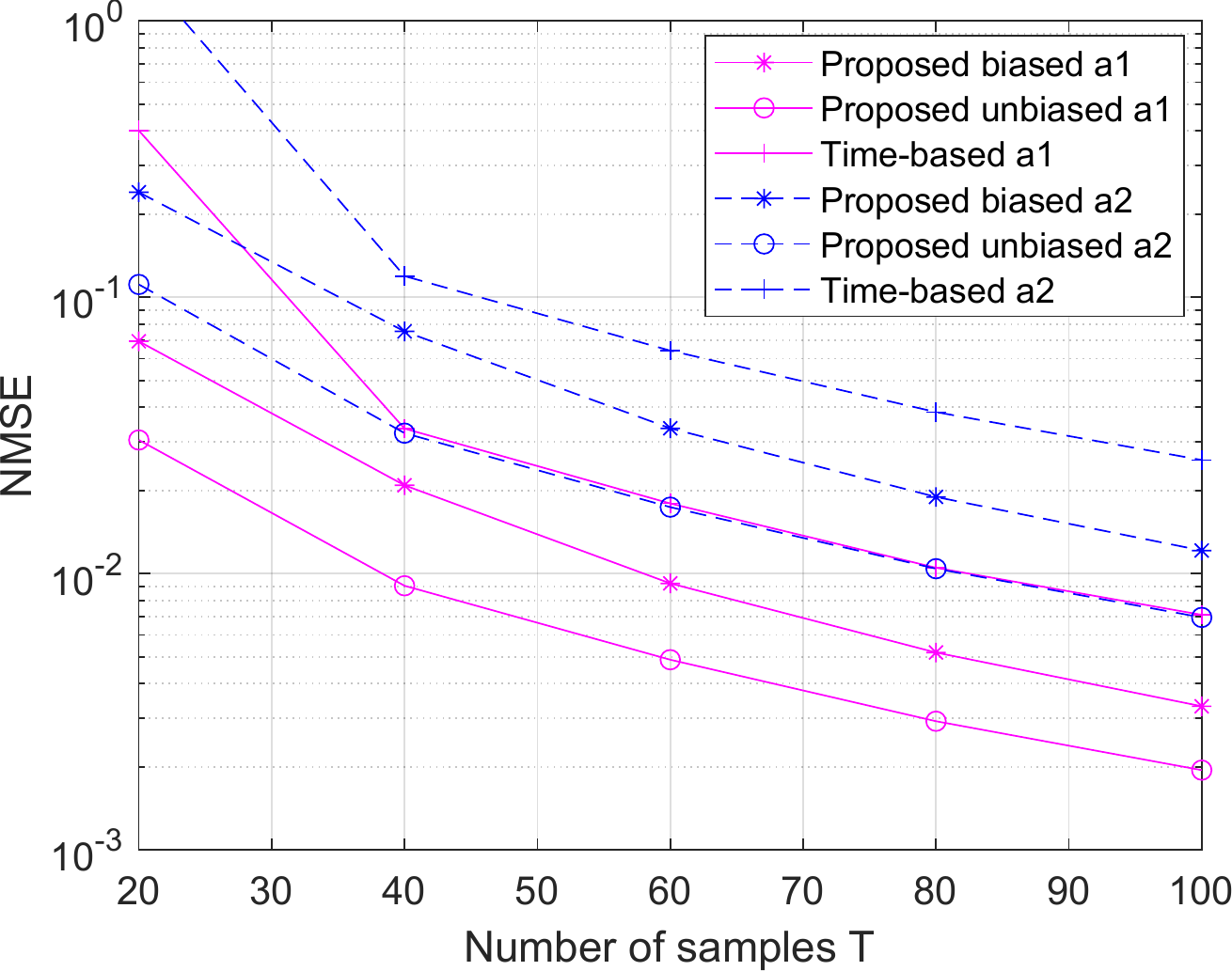}
		\vspace{-0.35cm}
	\caption{Normalized mean square errors of AR($2$) coefficient estimates versus $T=N_r$ with SNR$=0$ dB.}
\end{figure}

\subsection{Channel tracking}
In this subsection, we consider the AR($2$) Jakes' channel model with $a_1=1.8$ and $a_2=-0.9$. We apply a Kalman filter based channel estimation method from \cite{Kim'2021}. We assume $N_r=64$, the maximum size of the observation window is $T=N_r=64$ and the \ac{SNR} is equal to 0 dB. At each time $t$ the channel estimate is based on the $t$ concatenated observations of the received signal using the estimates of the AR($2$) coefficients based on those $t$ observations. The estimates of the AR($2$) coefficients are obtained using the same methods as in the previous section: the proposed approach and the time-based method. The genie method is referred to the channel estimation using the true values of the coefficients. The NMSEs of the different instantaneous (at time instant $t<T$ based on $t$ observations) channel estimation methods are compared in Figure~2. We observe that the proposed approach provides the best performance, especially for the unbiased case which is close to the one of the genie method. Moreover, we notice that a good performance is obtained with a quite small number of observations as compared to the size of the received signal.
\begin{figure}[t]\label{Instan_channel_simu}
		\center
	\includegraphics[width=7cm]{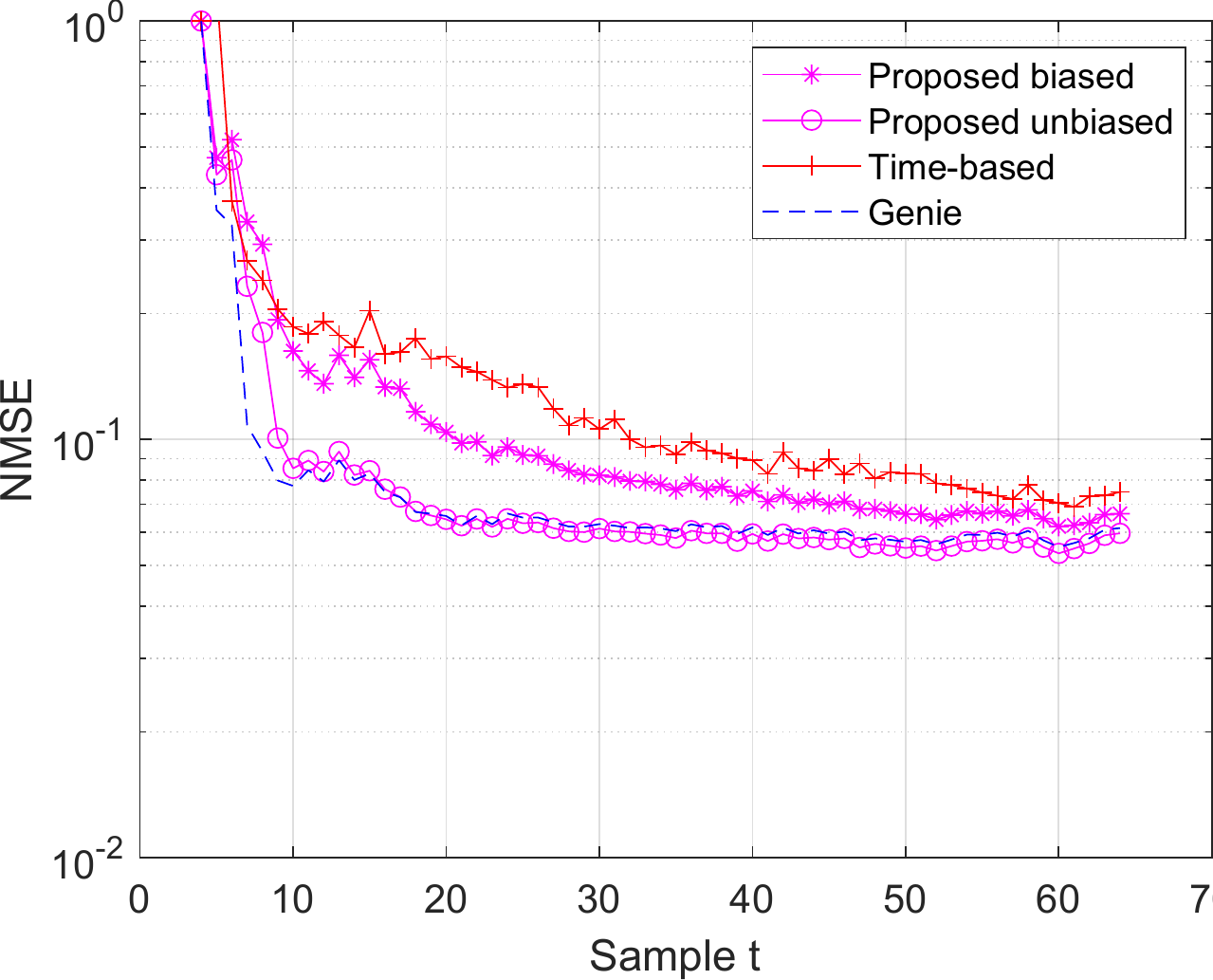}
	\vspace{-0.35cm}
	\caption{Normalized mean square errors of the instantaneous channel estimates at $t<T$ with SNR$=0$ dB.}
\end{figure}
The same channel estimation methods are compared in Figure~3 in terms of the NMSE for $T=N_r=64$ versus SNR. We observe that the proposed method is more beneficial especially at a lower SNR for which the observation noise is higher.
\begin{figure}[t]\label{SNR_simu}
	\center
	\includegraphics[width=7cm]{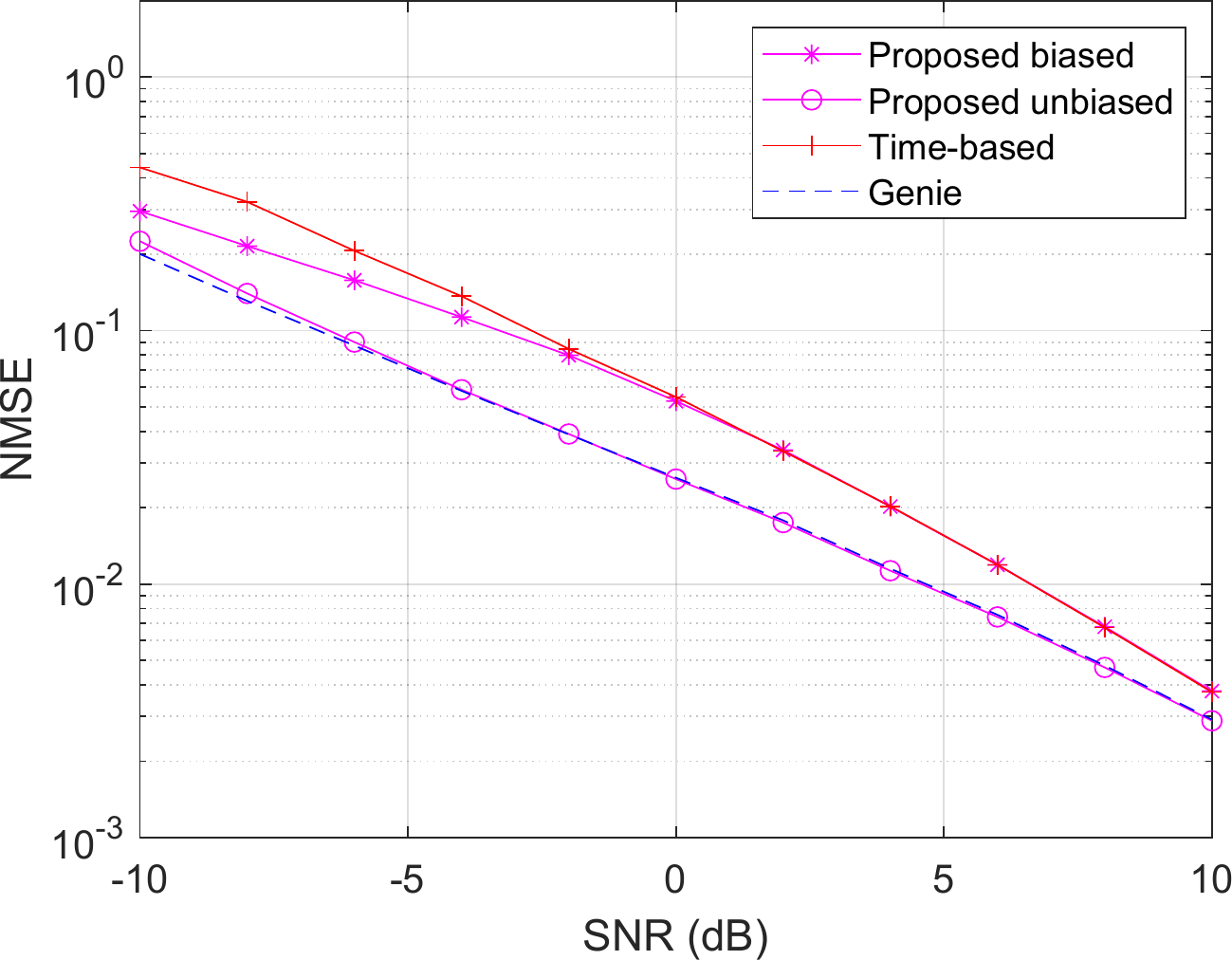}
		\vspace{-0.35cm}
	\caption{Normalized mean square errors of the channel estimates versus SNR.}

\end{figure}
Finally, the above methods are compared in Figure~4 in terms of the NMSE for SNR$=-5$ dB versus $N=T$. In this case we assume that the AR($2$) coefficient estimates are based on $T$ observations. Again, we observe that the proposed approach provides the best performance, close to the genie for the unbiased case.
\begin{figure}[t]\label{N_simu}
		\center
	\includegraphics[width=7cm]{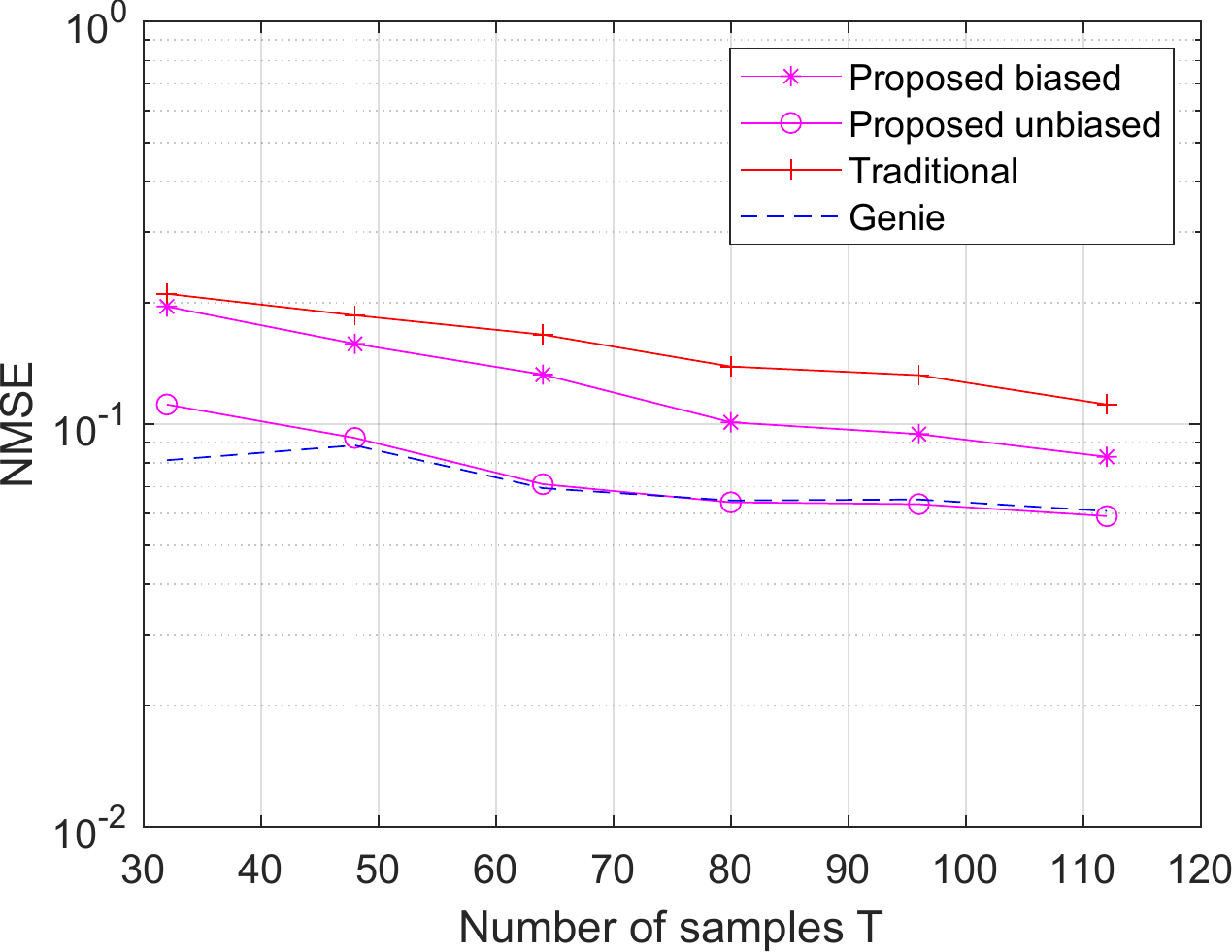}
	\vspace{-0.35cm}
	\caption{Normalized mean square errors of the channel estimates versus $T=N_r$ with SNR$=-5$ dB.}
\end{figure}

\section{Conclusions}
\label{Sec:Con}
In this paper, we considered the problem of estimating the parameters of an \ac{AR}
process, which can model the evolution of time-varying wireless channel in \ac{SIMO} systems.
This problem is motivated by the observation that when the parameters of the \ac{AR}
are properly set, the model can be used to develop channel estimators and predictors.
We have shown the almost sure convergence of the proposed estimate to the true value. The proposed estimates have been used for channel tracking for a specific case of Jakes' model. However, broader channel models following the \ac{AR} need to be further studied.

\section{Appendix}
\subsection{Proof of Lemma 1}\label{Proof_lemma1}
The random variable given by $\hat{r}^b(k)$ is integrable and has a finite variance $\sigma^2_{\hat{r}^b(k)}$ and a finite mean \linebreak $\E\left(\hat{r}^b(k)\right)=(1-|k|/T)r(k)$. Hence, from the Chebyshev's inequality, for any $\epsilon >0$, we have
\begin{align*}
\mathbb{P} \left[\left|\hat{r}^b(k) - \E\left(\hat{r}^b(k)\right) \right| \geq \epsilon \right] \leq \frac{\sigma^2_{\hat{r}^b(k)}}{\epsilon^2}.
\end{align*}
In the remaining of the proof, we consider the term $\sigma^2_{\hat{r}^b(k)}$ and calculate its upper bound.
\par We define $\widetilde{\bH} \triangleq \bZ\bQ$
where $\bZ \in \C^{N \times T}$ has i.i.d. elements $z_{n}(t) \sim {\cal CN}(0, 1)$ with rows denoted by $\bz_n$ and $\bQ=\sigma_x\bR^{1/2}+\sigma_w \bI_{T}=[{\bq}_0, \ldots, {\bq}_{T-1}]$ with $\bq_t\in \C^{T \times 1}$.
The entries of the matrices $\widehat{\bH}$ from \eqref{H_hat} and $\widetilde{\bH}$ have the same complex Gaussian distribution with independent rows and dependent columns with covariance matrices ${\bR}_{\hat{\bh}}$ defined in \eqref{Rhat}.
We can write, for $k=1-T, \ldots, T-1$:
\begin{align*}
\varr[\hat{r}^b(k)]
& = \varr\Big[\frac{1}{\sigma_x^2N_rT}\sum_{n=0}^{N_r-1}\sum_{t=0}^{T-1}\bz_n \bq_{t+k} \bq_{t}^{\sf T} \bz_n^{\sf H}\Big].
\end{align*}
Using the Cauchy-Schwarz inequality, after some steps we get
\begin{align*}
\varr[\hat{r}^b(k)]& \leq \frac{\norme{{\bR}_{\hat{\bh}}}}{\sigma_x^2N_r^2T^2}\sum_{n=0}^{N_r-1} \sum_{i=0}^{T-1} \varr|z_n(i)|^2 = \frac{\norme{{\bR}_{\hat{\bh}}}}{\sigma_x^2N_rT}
\end{align*}
where $z_n(i)$ are i.i.d. with a unit variance and $\norme{\bQ}^2=\norme{{\bR}_{\hat{\bh}}}$ is bounded as the norm of the $\bR$ is bounded because of the absolute summability assumption of the covariance coefficients $\sum_{k=1-T}^{T-1} r(k) < \infty$.
Hence, after some steps for any $\epsilon>0$
\begin{equation*}
\mathbb{P} \left[\left|\hat{r}^b(k) - {r}(k) \right| \geq \epsilon \right]
\leq \frac{\norme{{\bR}_{\hat{\bh}}}}{\epsilon^2\sigma_x^2N_rT}.
\end{equation*}
As $T,N_r$ converge to infinity, we get the almost sure convergence of the proposed estimator.
\par The unbiased case is proved following similar steps with the variance bounded by $\frac{\norme{{\bR}_{\hat{\bh}}}}{\epsilon^2\sigma_x^2N_r(T-|k|)}$.

\subsection{Proof of Lemma 2}\label{Proof_lemma2}
As in the above proof, the statistical behavior of the entries of the matrix $\widehat{\bH}$ is equivalent to the statistics of the entries of $\widetilde{\bH}$ defined in Appendix~\ref{Proof_lemma1}. We can apply Theorem~1 from \cite{Vino'2014} to get the almost sure convergence of $\widehat{\bR}^b$ and $\widehat{\bR}^u$ for both, biased and unbiased cases.
From the fact that $\bR$ and $\widehat{\bR}$ are Hermitian nonnegative Toeplitz (and so are $\bR_p$ and $\widehat{\bR}_p$), we have $\norme{\bR_p-\widehat{\bR}_p} \leq \norme{\bR-\widehat{\bR}}$ for any $p \in \left\{1,\ldots, T\right\}$ for both cases, we get the result.

\subsection{Proof of Theorem 1}\label{Proof_theorem1}
The proof is based on the usage of Lemma~1 and Lemma~2. Let $\widehat{\br}_p$ be the biased or unbiased vector of estimated covariance coefficients.
We can write
\begin{equation*}
\norme{{\ba}_p-\hat{\ba}_p} \leq \norme{{\bR}_p^{-1}}\norme{\br_p-\hat{\br}_p} + \norme{{\bR}_p^{-1}}\norme{\widehat{\bR}_p}\norme{\bI_{T}-{\bR}\widehat{\bR}^{-1}}.
\end{equation*}
From Lemma~1, we get the almost sure convergence to zero of the first term. The second term converges to zero from Lemma~2 and noticing that the norm $\norme{\widehat{\bR}_p}$ is bounded almost surely as $T,N_r \to \infty$.

\small
\bibliographystyle{ieeetr}
\bibliography{Bibliography}

\begin{thebibliography}{10}

\bibitem{Yan:01}
M.~Yan and D.~Rao, ``Performance of an array receiver with a {Kalman} channel
  predictor for fast {Rayleigh} flat fading environments,'' {\em IEEE Journal
  on Selected Areas in Communications}, vol.~6, no.~6, pp.~1164--1172, 2001.

\bibitem{Hijazi:10}
H.~Hijazi and L.~Ros, ``Joint data {QR}-detection and {Kalman} estimation for
  {OFDM} time-varying {R}ayleigh channel complex gains,'' {\em IEEE Trans. on
  Communications}, vol.~58, pp.~170--177, Jan. 2010.

\bibitem{Zhang:07B}
Y.~Zhang, S.~B. Gelfand, and M.~P. Fitz, ``Soft-output demodulation on
  frequency-selective {R}ayleigh fading channels uing {AR} channel models,''
  {\em IEEE Trans. on Communications}, vol.~55, pp.~1929--1939, Oct. 2007.

\bibitem{Lehmann:08}
F.~Lehmann, ``{Blind estimation and detection of space-time Trellis coded
  transmissions over the Rayleigh fading {MIMO} channel},'' {\em IEEE Trans. on
  Communications}, vol.~56, pp.~334--338, Mar. 2008.

\bibitem{Abeida:10}
H.~Abeida, ``Data-aided {SNR} estimation in time-variant {R}ayleigh fading
  channels,'' {\em IEEE Trans. on Signal Processing}, vol.~58, pp.~5496--5507,
  Nov. 2010.

\bibitem{Truong:13}
K.~T. {Truong} and R.~W. {Heath}, ``Effects of channel aging in massive {MIMO}
  systems,'' {\em Journal of Communications and Networks}, vol.~15, no.~4,
  pp.~338--351, 2013.

\bibitem{Kong:15}
C.~{Kong}, C.~{Zhong}, A.~K. {Papazafeiropoulos}, M.~{Matthaiou}, and
  Z.~{Zhang}, ``Sum-rate and power scaling of massive {MIMO} systems with
  channel aging,'' {\em IEEE Trans. on Communications}, vol.~63, no.~12,
  pp.~4879--4893, 2015.

\bibitem{Zhang:16}
C.~{Zhang}, D.~{Guo}, and P.~{Fan}, ``Tracking angles of departure and arrival
  in a mobile millimeter wave channel,'' in {\em IEEE ICC}, pp.~1--6, 2016.

\bibitem{Papa:18}
A.~{Papazafeiropoulos} and T.~{Ratnarajah}, ``Modeling and performance of
  uplink cache-enabled massive {MIMO} heterogeneous networks,'' {\em IEEE
  Trans. on Wireless Communications}, vol.~17, no.~12, pp.~8136--8149, 2018.

\bibitem{Kim:20}
H.~{Kim}, S.~{Kim}, H.~{Lee}, C.~{Jang}, Y.~{Choi}, and J.~{Choi}, ``Massive
  {MIMO} channel prediction: {Kalman} filtering vs. machine learning,'' {\em
  IEEE Trans. on Communications}, pp.~1--1, 2020.
\newblock early access.

\bibitem{Yuan:20}
J.~{Yuan}, H.~Q. {Ngo}, and M.~{Matthaiou}, ``Machine learning-based channel
  prediction in massive {MIMO} with channel aging,'' {\em IEEE Trans. on
  Wireless Communications}, vol.~19, no.~5, pp.~2960--2973, 2020.

\bibitem{Esfandiari:20}
M.~Esfandiari, S.~A. Vorobyov, and M.~Karimi, ``New estimation methods for
  autoregressive process in the presence of white observation noise,'' {\em
  Signal Processing (Elsevier)}, vol.~2020, no.~171, pp.~10780--10790, 2020.

\bibitem{Gray'06}
R.~M. Gray, {\em Toeplitz and circulant matrices: A review}.
\newblock Now Pub., 2006.

\bibitem{Kay'1988}
S.~M. Kay, {\em Modern spectral estimation: {Theory} and application}.
\newblock Englewood Cliffs, N.J.: Prentice-Hall, 1988.

\bibitem{Vino'2014}
J.~Vinogradova, R.~Couillet, and W.~Hachem, ``Estimation of {T}oeplitz
  covariance matrices in large dimensional regime with application to source
  detection,'' {\em IEEE Trans. on Signal Processing}, vol.~63, 03 2014.

\bibitem{Baddour:05}
K.~E. {Baddour} and N.~C. {Beaulieu}, ``Autoregressive modeling for fading
  channel simulation,'' {\em IEEE Trans. on Wireless Communications}, vol.~4,
  no.~4, pp.~1650--1662, 2005.

\bibitem{ElHusseini'2019}
A.~H. {El Husseini}, E.~P. Simon, and L.~Ros, ``{Second-order autoregressive
  model-based Kalman filter for the estimation of a slow fading channel
  described by the Clarke model: Optimal tuning and interpretation},'' {\em DSP
  (Elsevier)}, vol.~90, pp.~125--141, 2019.

\bibitem{Kim'2021}
H.~{Kim}, S.~{Kim}, H.~{Lee}, C.~{Jang}, Y.~{Choi}, and J.~{Choi}, ``Massive
  {MIMO} channel prediction: {K}alman filtering vs. machine learning,'' {\em
  IEEE Trans. on Communications}, vol.~69, no.~1, pp.~518--528, 2021.

\end{thebibliography}

\end{document}